\lstdefinestyle{mystyle}{
    backgroundcolor=\color{backcolour},
    commentstyle=\color{codegreen},
    keywordstyle=\color{magenta},
    numberstyle=\tiny\color{codegray},
    stringstyle=\color{codepurple},
    basicstyle=\ttfamily\footnotesize,
    breakatwhitespace=false,
    breaklines=true,
    captionpos=b,
    keepspaces=true,
    numbers=left,
    numbersep=5pt,
    showspaces=false,
    showstringspaces=false,
    showtabs=false,
    tabsize=2
}
\newcommand{\om}{\omega}
\newcommand{\ka}{\kappa}
\newcommand{\la}{\lambda}
\newcommand{\al}{\alpha}
\newcommand{\ii}{\mathrm{i}}
\newcommand{\ee}{\mathrm{e}}
\newcommand{\eq}[1]{Eq.~(\ref{eq:#1})}
\newcommand{\fig}[1]{Fig.~\ref{fig:#1}}
\renewcommand{\sec}[1]{Sec.~\ref{sec:#1}}
\newcommand{\app}[1]{Appendix~\ref{sec:#1}}
\newcommand{\SWAP}{\mathrm{SWAP}}
\newcommand{\mc}[1]{\mathcal{#1}}
\newcommand{\id}{\mathbbm{1}}
\newcommand{\tr}{\mathrm{tr}}
\newcommand{\nn}{\nonumber}
\newcommand{\ket}[1]{\left\lvert #1 \right\rangle} 
\newcommand{\bra}[1]{\left\langle #1 \right\rvert} 
\newcommand{\ketbra}[2]{\ket{#1}\!\!\bra{#2}}
\let\perptmp\perp
\renewcommand{\perp}{{\! \mathsmaller{\perptmp}}}
\newcommand{\mrm}{\mathrm}
\newcommand{\nodagger}{{\phantom{\dagger}}}
\newcommand{\dd}{\mathrm{d}}
\def\bgamma{{\boldsymbol \gamma}}
\def\bbeta{{\boldsymbol \beta}}
\newtheorem{theorem}{Theorem}[section]
\theoremstyle{definition}
\newtheorem{definition}{Definition}[section]
\theoremstyle{remark}
\newtheorem{lemma}[theorem]{Lemma}
\def\@bibdataout@aps{%
\immediate\write\@bibdataout{%
@CONTROL{%
apsrev41Control%
\longbibliography@sw{%
    ,author="08",editor="1",pages="1",title="0",year="1"%
    }{%
    ,author="08",editor="1",pages="1",title="",year="1"%
    }%
  }%
}%
\if@filesw \immediate \write \@auxout {\string \citation {apsrev41Control}}\fi
}
\begin{document}

\author{Joris Kattem\"olle}
\author{Guido Burkard}
\affiliation{Department of Physics, University of Konstanz, D-78457 Konstanz, Germany}
\title{Ability of error correlations to improve the performance\\ of variational quantum algorithms}

\begin{abstract}
  The quantum approximate optimization algorithm (QAOA) has the potential of providing a useful quantum advantage on noisy intermediate-scale quantum (NISQ) devices. The effects of uncorrelated noise on variational quantum algorithms such as QAOA have been studied intensively. Recent experimental results, however, show that the errors impacting NISQ devices are significantly correlated. We introduce a model for both spatially and temporally (non-Markovian) correlated errors based on classical environmental fluctuators. The model allows for the independent variation of the marginalized spacetime-local error probability and the correlation strength. Using this model, we study the effects of correlated stochastic noise on QAOA. We find evidence that the performance of QAOA improves as the correlation time or correlation length of the noise is increased at fixed local error probabilities. This shows that noise correlations in themselves need not be detrimental for NISQ algorithms such as QAOA.
\end{abstract}

\maketitle

\section{Introduction}\label{sec:introduction}
Quantum computers hold the promise of outperforming classical computers on tasks such as the simulation of quantum-mechanical systems~\cite{feynman1982simulating,lloyd1996universal}, factoring~\cite{shoralgorithms}, unstructured database search~\cite{grover1996fast}, and solving linear systems of equations~\cite{harrow2009quantum}. However, it is much harder to shield qubits, the fundamental building blocks of quantum computers, from environmental noise than it is to shield classical bits~\cite{suter2016protecting}, leading to errors in quantum computations. These errors can be detected and corrected, provided that error probabilities remain below a certain threshold, as is proven by threshold theorems for fault-tolerant quantum computation~\cite{shor1996fault,aharonov1997fault, knill1998resilient,aharonov2008fault}. Fault-tolerant quantum computation remains possible in the presence of some forms of spatially and temporally correlated errors~\cite{terhal2005fault, aharonov2006fault, aliferis2006quantum, aharonov2008fault,preskill2013sufficient} but breaks down in others~\cite{clader2021impact}. In any case, the quantum overhead required for fault-tolerant quantum computation is currently prohibitively large.

Nevertheless, current pre-error-corrected noisy intermediate-scale quantum (NISQ)~\cite{preskill2018nisq} computers can already outperform classical computers on some tasks and have hence shown a quantum advantage~\cite{arute2019quantum,zhong2020quantum}. However, the tasks for which a quantum advantage has been demonstrated are artificial and have no known applications. Thus, the next milestone in the field will be the demonstration of a \emph{useful} quantum advantage.

Hybrid quantum-classical variational quantum algorithms (VQAs) have the potential of showing a useful quantum advantage on NISQ devices~\cite{cerezo2021variational}. In these algorithms, a parametrized quantum state, known as the ansatz, is prepared using a parameterized quantum circuit.  Subsequently, the expectation value of some observable (which depends on the specific VQA) is estimated by repeated preparation and measurement of the ansatz. The parameters and the resulting expectation value are fed into a classical optimization algorithm, which in turn suggests new parameters. By repeating the above steps, the optimizer heuristically seeks parameters that optimize the expectation value of the observable. The VQA returns the optimal value of the observable and the parameters at which it was attained. Using these parameters, the optimal ansatz state can be reconstructed, which allows for the extraction of additional classical data from the optimal ansatz state.

\begin{figure}[b]
  \begin{center}
\vspace{-1em}
    \includegraphics[scale=1.1]{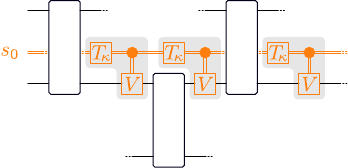}
    \vspace{-1em}
  \end{center}
  \caption{\label{fig:error_model} The temporal fluctuator model. Each qubit interacts with an independent classical two-level fluctuator in the bath (orange, only one fluctuator shown). The fluctuator is initially excited with probability $p$, as described by the classical ensemble $s_0$. The operation $T_{\kappa}$ resets the fluctuator to $s_0$ with probability $1-\kappa$. We refer to $\ka$ as the correlation strength. The unitary error operator $V$ is applied to the qubit if and only if the fluctuator is excited, leading to an error model with independent time-local error probability $p$ and correlation time $0\leq\tau=-1/\ln{\kappa}<\infty$ (in units of gate time). For a circuit of depth $m$, acting on $n$ qubits, the full temporal fluctuator model is obtained by repeating the shaded area $m$ times and the entire structure $n$ times. Empty rectangles represent generic gates. At the end of the circuit, the fluctuator is discarded.}
\end{figure}

The quantum approximate optimization algorithm (QAOA)~\cite{farhi2014quantum} is a VQA designed to (approximately) solve instances from a large set of optimization problems~\cite{lucas2014ising}, including problems of practical relevance, such as portfolio optimization~\cite{hodson2019portfolio} and correlation clustering~\cite{weggemans2022solving}. The cost function of the optimization problem is first mapped to an Ising-type Hamiltonian whose ground state corresponds to the optimal solution of the optimization problem. This Hamiltonian is the observable being measured during the optimization loop. In the context of QAOA, the expectation value of the Hamiltonian, as a function of the variational parameters, is called the cost function landscape. QAOA seeks approximate ground states of the Hamiltonian by optimizing the cost function landscape and thereby finds approximate solutions to the optimization problem.

A common figure of merit for QAOA is the approximation ratio (AR), defined as the ratio of the cost function landscape at the best parameters found by QAOA to the global optimal value of the classical cost function.  Along the lines of Ref.~\cite{fontana2021evaluating}, to evaluate the effects of noise, a \emph{noise-unaware} AR can be defined. The noise-unaware AR is obtained by first running QAOA in classical simulations with all noise turned off. The resulting optimal parameters are subsequently used to compute the expectation value of the Hamiltonian while the noise is turned on. An AR that is larger than the noise-unaware AR indicates that variational parameters can adapt to the presence of noise. We call this ability the \emph{noise adaptivity} of the optimal parameters.

The behavior of VQAs such as QAOA under \emph{local} errors has been studied extensively~\cite{xue2021effects, marshall2020characterizing, mcclean2016theory, malley2016scalable, mcclean2016theory, malley2016scalable, mcclean2017hybrid,colless2018computation, gentini2020noise,sharma2020noise,fontana2021evaluating}, showing that VQAs possess forms of resilience against coherent~\cite{mcclean2016theory, malley2016scalable} and incoherent~\cite{mcclean2017hybrid,colless2018computation, gentini2020noise,sharma2020noise,fontana2021evaluating} errors. VQAs~\cite{mcclean2017hybrid}, specifically QAOA~\cite{farhi2016quantum}, are therefore thought to be ideal candidates for obtaining a useful quantum advantage.

Recent experimental results have highlighted the ubiquity of both spatially and temporally correlated errors in NISQ devices. Sources of spatially correlated errors include crosstalk~\cite{heinz2021crosstalk,berg2022probabilistic}, fluctuations of external, quasihomogeneous magnetic fields~\cite{monz2011qubit}, and (cosmic) radiation~\cite{wilen2021correlated,cardani2021reducing}. Sources of temporally correlated errors include $1/f$ noise (where $f$ denotes frequency)~\cite{dutta1981low,koch1983flicker, koch2007model,connors2022charge}, reflections in drive lines~\cite{reed2013entanglement}, long-lived quasiparticle excitations~\cite{wilen2021correlated}, nuclear spins~\cite{Khaetskii2002}, and microscopic two-level systems~\cite{burkard2009non-markovian,schloer2019correlating,burnett2019decoherence}. In particular, Refs.~\cite{schloer2019correlating,burnett2019decoherence} studied the stability of superconducting (transmon) qubits. Telegraph-like switching of relaxation times~\cite{burnett2019decoherence} and qubit frequency~\cite{schloer2019correlating} were observed. Both works attribute these effects to single two-level systems local to each qubit.

Given the importance of VQAs, their resilience against local errors, and the prevalence of correlated errors, an understanding of the effects of correlated errors on VQAs is needed. We make a first step in this direction by studying the effects of temporally correlated (non-Markovian) and spatially correlated errors on QAOA.

We introduce a toy model for temporally correlated errors (\fig{error_model}) that is inspired by the two-level systems local to each qubit of Refs.~\cite{schloer2019correlating,burnett2019decoherence}. Our model captures the essence of correlated errors, which we consider to be the existence of a nontrivial error probability and correlation strength. To this end, we assign a classical two-level fluctuator to each qubit that causes a unitary error operation $V$ on the associated qubit conditioned on the state of the fluctuator. The error probability $p$ and correlation time $\tau$ of these errors, defined in \fig{error_model}, are determined by the internal evolution of the fluctuators. A significant advantage of this model is its ability to fully and independently control the marginalized, time-local error probability $p$ and correlation time $\tau$. Interchanging the roles of space and time, a model for spatially correlated errors is obtained with local error probability $p$ and correlation length $\la$ (\fig{error_model_spatial}). Both $\tau$ and $\la$ arise from the same fluctuator parameter $\ka$ that we call the \emph{correlation strength}. Thus, we are able to treat temporally and spatially correlated errors on an equal footing. Limitations of our model are discussed in \sec{discussion} of this paper.

\begin{figure}
  \begin{center}
    \includegraphics[scale=1.1]{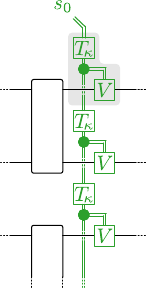}
  \end{center}
  \caption{\label{fig:error_model_spatial} The spatial fluctuator model. Definitions are as in \fig{error_model}. As opposed to  \fig{error_model}, the fluctuator is displayed in green and travels through space rather than time. The error model for one circuit layer is generated by repeating the shaded area in the spatial direction $n$ times, with $n$ the number of qubits. The full error model is obtained by repeating the obtained structure after each gate time. In the spatial fluctuator model, the correlation strength $\ka$ leads to an error correlation length $0\leq\la=-1/\ln{\kappa}<\infty$ in units of interqubit distance.}
\end{figure}

We classically simulate the performance of QAOA under the fluctuator error models of Figs.~\ref{fig:error_model} and~\ref{fig:error_model_spatial} for various problem instances, error probabilities $p$, and correlation strengths $\ka$. Our main results are the following.

Our first result is the observation of an \emph{increase} of the performance of QAOA (as measured by the AR) as a function of the correlation strength $\ka$ at fixed spacetime-local marginalized error probabilities $p$ across all cases studied. We analytically derive the effect of our fluctuator models on the AR to first order in the spacetime-local error probability $p$. Using this linear order theory, we show the increase in AR with correlation strength is explained by a counting argument, stating that a correlated error can happen in fewer ways than an uncorrelated error. Although a single, strongly correlated error can have a stronger effect on the output state than a single error happening at a single spacetime location, in all cases we studied this does not outweigh the fact that there are simply fewer ways in which a correlated error may happen.

A second, distinct result is the observation of a linear degradation of the performance of QAOA as a function of the error probability $p$ around $p=0$, noise adaptivity of QAOA, and the existence of critical error probabilities for this noise adaptivity, all for uncorrelated and correlated errors that do not break the symmetries of the cost function landscape. As $p$ is increased away from $p=0$, the AR and the noise-unaware AR decrease linearly and are essentially equal up until some critical $p$ that depends on the fluctuator model (temporal or spatial), the correlation strength, and problem instance. After this critical $p$, the AR becomes better than the noise-unaware AR. This divergence between the AR and noise-unaware AR coincides with an abrupt jump in optimal parameters that are otherwise essentially constant. This indicates that there are two competing, separated points in the cost function landscape and that one of them overtakes the other at the critical $p$.

Recently, similar linear decay, noise adaptivity of optimal parameters, and critical error probabilities were observed for a VQA by Fontana et al.~\cite{fontana2021evaluating}. Our results regarding these points go beyond the results of Fontana et al., first because we show them for QAOA, and second because our results also hold for correlated errors. Finally, Fontana et al. attributed the noise adaptivity to the fact that their noise model breaks the symmetries of the cost function landscape. Our fluctuator error models do not break the symmetries of the cost function landscape, showing the existence of additional mechanisms for noise adaptivity.

The remainder of this paper is organized as follows. We give a more detailed introduction to QAOA in \sec{QAOA}. In \sec{error_model}, we derive physical properties of our error model, such as correlation functions and marginalized error probabilities. In the same section, we derive the linear order effects of our fluctuator models on the cost function landscape. Subsequently, in \sec{numerical_methods_and_results}, we describe our numerical methods and present and analyze our  results, followed by a discussion in \sec{discussion}.

\section{QAOA}\label{sec:QAOA}
A wide class of optimization problems can be formulated using a quadratic cost function, $C(z)=\sum_{i<j}^{n}\om_{ij}z_iz_j+\sum_{i=1}^{n} \om_i z_i$, with $z_i\in\{-1,1\}$~\cite{lucas2014ising}. The goal is to find a $z=(z_1,\ldots,z_n)$ that optimizes $C$ globally. Depending on the specific application, the optimization is a maximization or minimization of $C$. The cost function can readily be mapped to a Hamiltonian on $n$ qubits,
\begin{equation}\label{eq:Hamiltonian}
  H=\sum_{i<j}^{n}\om_{ij}Z_iZ_j+\sum_{i=1}^n \om_i Z_i,
\end{equation}
with $Z_i$ the Pauli-$Z$ operator acting on qubit $i$. The mapping is such that $C(z)=\bra z H \ket z$. The \emph{Sherrington–Kirkpatrick} (SK) model encompasses all
cost functions with $\om_{ij}\in\{1,-1\}$ and $\om_i=0$. In this case, both $C$ and $H$ can be identified with the same undirected weighted graph with $n$ nodes and adjacency matrix $\om_{ij}$. We do not assume the SK model in this section unless stated otherwise.

QAOA~\cite{farhi2014quantum} bounds the optimal value of $C$ by optimizing the cost function landscape $\tilde C:\mathbb R^{2r}\rightarrow \mathbb R$, with
\begin{equation}\label{eq:cost_function_landscape}
  \tilde C(\bgamma,\bbeta)=\bra{\bgamma,\bbeta}H \ket{\bgamma,\bbeta},
\end{equation}
by a classical heuristic optimization method of choice. Here, $\ket{\bgamma,\bbeta}$ is the ansatz quantum state, depending on $2r$ parameters. It is prepared on a quantum computer by
\begin{equation}\label{eq:ansatz}
  \ket{\bgamma,\bbeta}=\prod_{k=r}^1S(\gamma_k,\beta_k)\ket +^{\otimes n},
\end{equation}
with $\ket + = (\ket 0 + \ket 1)/\sqrt{2}$, and the cycle
\begin{align}\label{eq:cycle}
S(\gamma_k,\beta_k)=&\exp\left( -\ii \frac{\beta_k}{2} \sum_{i=1}^n X_i\right)
 \exp\left( -\ii\frac{\gamma_k}{2} H\right)\nn\\
 =& RX^{\otimes n}(\beta_k)\left[\prod_iRZ_i(\om_i\gamma_k)\right]\nn \\ & \times \left[\prod_{i<j} RZZ_{ij}(\om_{ij}\gamma_k)\right].\
\end{align}
Here, $RX(\al)=\ee^{-\ii \al X/2}$, $RZ_i(\al)=\ee^{-\ii \al Z_i/2}$, and $RZZ_{ij}(\al)=\ee^{-\ii \al Z_iZ_j/2}$. In this work, the order of the products is from left to right, so that in \eq{ansatz}, $S(\gamma_1,\beta_1)$ is the first cycle to act on $\ket +^{\otimes n}$. To obtain an estimate for $\tilde C$, the ansatz state is prepared repeatedly, each time measuring $H$ at the end of the circuit. Since $H$ is diagonal in the computational basis, this measurement can be performed by measurements in the computational basis and classical postprocessing of the measurement outcomes.

\begin{figure}
\includegraphics[scale=1.1]{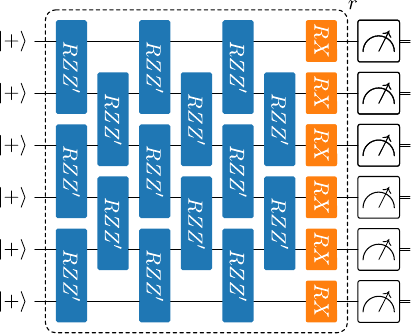}
\caption{SWAP-network implementation of the QAOA ansatz on $n=6$ qubits in the absence of 1-local terms in the problem Hamiltonian ($\om_i=0$), with $RZZ'=RZZ\cdot \SWAP$ and dependence on variational parameters suppressed. The ansatz cycle, marked by the dashed box, is applied to the initial state $\ket +^{\otimes n}$ a total of $r$ times, each time with new parameters. The cost function landscape is obtained through repeated preparation, measurement, and classical postprocessing.\label{fig:QAOA}}
\end{figure}

On NISQ devices, the ansatz state will be described by a mixed state $\rho(\bgamma,\bbeta)$. In this work, the focus is on correlated errors during the ansatz circuit and we will therefore assume perfect estimation of $\tilde C$ throughout. That is, we use
\begin{equation}\label{eq:cost_function_landscape_mixed}
\tilde C(\bgamma,\bbeta)=\tr[\rho(\bgamma,\bbeta) H]
\end{equation}
to compute the cost function landscape in our classical simulation of QAOA (\sec{numerical_methods_and_results}).

For SK problem Hamiltonians ($\om_{ij}\in\{1,-1\},\om_i=0$), each cycle of the ansatz circuit requires one two-qubit gate $RZZ$ between each pair of qubits. To accomplish this on quantum hardware without all-to-all connectivity, SWAP gates need to be inserted. On hardware with square-grid or line connectivity, the optimal way of doing so is by a SWAP network~\cite{kivlichan2018quantum, gorman2019generalized}. In the SWAP-network implementation of the ansatz circuit (\fig{QAOA}), the order of the product of $RZZ$ gates forms a brickwork structure, and a SWAP gate is inserted after each $RZZ$ gate \footnote{A SWAP gate effectively swaps qubit indices. This has to be accounted for by changing the parameters in subsequent $RZZ$ gates, which depend on the effective qubit labels the $RZZ$ gate is acting on, accordingly. One way of implementing this is by keeping a list of qubit indices, initialized at $x=(1,2,\ldots,n)$. Every time a SWAP gate acts on qubits $i$ and $j$, the list is updated with $x_i\leftrightarrow x_j$. Each $RZZ(\om_{ij}\gamma_k)$ has to be replaced with $RZZ(\om_{x_i x_j}\gamma_k)$ using the state of the list $x$ at the time the $RZZ$ gate acts. For an even number of cycles, the output state of the SWAP-network implementation is identical to the output of the circuit in \eq{ansatz}. For an uneven number of cycles, the effective order of the qubits is reversed, which can be easily accounted for in the classical postprocessing of the measurement outcomes.}. From $\om_{ij}\in \{1,-1\}$ and the antiperiodicity of $RZZ$ (and $RZZ'$) with period $2\pi$, it follows that the circuit cycle $S(\gamma_k,\beta_k)$ is invariant (up to an irrelevant overall sign) under $\gamma_k\mapsto \gamma_k\pm 2\pi$ for any $k$. As a consequence, the noiseless cost function landscape is fully periodic in $\gamma_k$ with period $2\pi$ for any $k$. A similar statement holds for $RX$. Because of the properties of the circuit ansatz and the SK Hamiltonian, much stronger and nontrivial symmetry relations hold, which are derived in \app{symmetries}. In the same appendix, we prove that these symmetries are not broken by spatiotemporally correlated Pauli error channels, which includes our fluctuator models if the unitary error operator $V$ (defined in Figs.~\ref{fig:error_model},~\ref{fig:error_model_spatial}) is a Pauli operator.

Returning to the general case [\eq{Hamiltonian}], a widely used figure of merit of optimization algorithms is the \emph{Approximation Ratio} (AR),
\begin{equation}\label{eq:AR}
 AR=\frac{\tilde C(\bgamma^*,\bbeta^*)}{C^*},
\end{equation}
with $\bgamma^*,\bbeta^*$ the optimal parameters as returned by the optimization algorithm, and $C^*$ the true, global optimum of the cost function $C$, which is equal to the ground state energy of $H$ by construction. The algorithm has found the global optimum of any $C$ if and only if $AR=1$. If, in contrast, the ansatz state is maximally mixed, $\rho(\bgamma,\bbeta)=\id/2^n$, which amounts to randomly guessing measurement outcome bit stings, $AR=0$ by the definition of $\tilde C$ [\eq{cost_function_landscape_mixed}] and the tracelessness of $H$ [\eq{Hamiltonian}]. In a noisy implementation of QAOA, the AR that is reached is in principle \emph{noise aware}; noise may cause the optimum of $\tilde C$ to change and the optimal parameters may adapt to this change by classical heuristic optimization.

To assess the resilience of QAOA against correlated errors, in this work we also consider the \emph{noise-unaware} AR. In classical simulation, where error probabilities can be changed at will, QAOA can be run without noise, obtaining the optimal parameters $\bgamma^{*}_0,\bbeta^{*}_0$. These parameters are thus unaware of the effects of noise on the cost function landscape. Putting them into the noisy circuit, we obtain $\tilde C(\bgamma^{*}_0,\bbeta^{*}_0 )$. Thus, we define the \emph{noise-unaware AR} as
\begin{equation}\label{eq:AR0}
 AR_0=\frac{~\tilde C(\bgamma^{*}_0,\bbeta^{*}_0 )}{C^*}.
\end{equation}
Using the noise-unaware AR, we define the \emph{noise adaptivity} as
\begin{equation}\label{eq:noise_adaptivity}
  \Delta AR = AR-AR_0.
\end{equation}
Given that the optimum found in the noiseless case and the optimum found in the noisy case are global, we have $\Delta AR\geq 0$, where $\Delta AR\neq 0$ indicates that the optimal parameters have adapted to the noise.

A quantity similar to $AR_0$ was introduced in Ref.~\cite{fontana2021evaluating}. There, instead of the expectation value of the Hamiltonian, the fidelity between the ansatz state and a target state was used as the cost function. Furthermore, the optimal noise-aware parameters were obtained by using the noise-unaware parameters as the initial point of optimization. This led the authors of Ref.~\cite{fontana2021evaluating} to call their noise-aware parameters the \emph{reoptimized} parameters. The advantage of reoptimization is that $\Delta AR\geq 0$ is guaranteed. In the current work, we find the optimal parameters for $AR$ and $AR_0$ independently to prevent any bias of the noise-aware optimal parameters to points close to the noise-unaware parameters. Furthermore, our data (\sec{numerical_methods_and_results}) obeys $\Delta AR\geq 0$ without enforcing this property by reoptimization.

\section{Error model}\label{sec:error_model}
The internal time evolution of a single fluctuator can be described by a two-state, discrete-time, and time-homogeneous Markov chain. (See, e.g., Refs.~\cite{breuer2007theory,serfozo2009basics} for background information on Markov chains.) Using this formulation, we obtain the correlation time, time-local marginalized error probabilities, and the expected number of errors of the error models used in this paper (Figs.~\ref{fig:error_model},\ref{fig:error_model_spatial}). Analytical first-order effects of \emph{uncorrelated} errors in the context of variational quantum algorithms were studied before in Refs.~\cite{xue2021effects,marshall2020characterizing,fontana2021evaluating}. We extend these methods to derive the analytical first-order effects of our correlated error models on QAOA.

In this section, we use terminology from the model of temporally correlated errors. Nonetheless, the results hold for the spatial fluctuator model if the state of the fluctuator after circuit layer $i$ is reinterpreted as the state of the fluctuator after qubit $i$. In this case, the correlation time $\tau$ becomes the correlation length $\la$.

\subsection{Two-level fluctuator}
Consider a single classical two-level fluctuator with ground state $(1,0)^\mrm{T}$ and excited state $(0,1)^\mrm{T}$, with $\mrm T$ the transpose. If the fluctuator is excited with probability $0\leq p\leq 1$, its state is described by the ensemble
\begin{equation}
s_0=\left( \begin{array}{c} 1-p \\ p\end{array} \right).
\end{equation}
This does not represent a quantum state, but a classical ensemble equivalent to the classical mixed quantum state $\mrm{diag}(s_0)$. We take $s_0$ as the initial state of the fluctuator. Define the random variable
\begin{align}\label{eq:RV}
  B[(1,0)^\mrm{T}]=0
  ,&&B[(0,1)^\mrm{T}]=1,
\end{align}
which can, e.g., be imagined as the strength of a magnetic field caused by the fluctuator at its associated qubit.

At time $t$ (in units of gate time), the state of the fluctuator, $s_{t}$, is retained with probability $0\leq \kappa \leq 1$, and it is reset to $s_0$ with probability $1-\kappa$. We refer to $\ka$ as the correlation strength. After this process, the state is given by $s_{t+1}=Ts_{t}$, with $T$ the transition matrix,
\begin{equation}\label{eq:transition}
  T=\left[
    \begin{array}{cc}
      \kappa+(1-\kappa)(1-p)  & (1-\kappa)(1-p)\\
      (1-\kappa)p          & \kappa+(1-\kappa)p
    \end{array}
    \right],
\end{equation}
denoted by $T_\kappa$ in Figs.~\ref{fig:error_model}, \ref{fig:error_model_spatial}.

To see that $T$ describes the desired process, assume, for example, that at time $t$, the fluctuator is excited; $s_t=(0,1)^\mrm{T}$. After it is reset with probability $1-\kappa$, there are two ways in which it can remain excited: either the fluctuator was not reset, which happens with probability $\kappa$, or the fluctuator was reset to $s_0$, but is excited merely because in the state $s_0$, the fluctuator is excited with probability $p$. Thus, the overall probability that the fluctuator remains excited is  $(0,1)T(0,1)^\mrm{T}=T_{11}=\kappa+(1-\kappa)p$. Other entries of $T$ follow similarly. Note $T_{ba}$ describes the probability to transition from the state $a$ to $b$ (in terms of the values that the random variable $B$ can take).

The initial state $s_0$ is a steady state of $T$ because $T$ sends $s_0$ to $s_0$ both when the state is kept and when it is reset. (Explicit diagonalization is a simple way to check that this is the only nontrivial steady state). Thus, from an ensemble viewpoint, the Markov chain is trivial; $s_t=T^ts_0=s_0$ for all $t$. Therefore, the marginalized, time-local probability to be in the excited state is $p$, independent of $t$ and $\kappa$. Under the definition $B_t=B(s_t)$, it follows that $\mathbb E(B_t)=p$ for all $t$. Defining the random variable $B_{\mrm{tot}}=\sum_{t'=0}^tB_{t'}$, it follows by the linearity of expectation values, which also holds for correlated random variables, that the expected number of times the fluctuator has been excited from $t=0$ up to and including $t=t'$ is simply $(t+1)p$, irrespective of $\kappa$.

The state of the fluctuator at time $t$ is given by $s_t=T^ts_0$, where the $t$th matrix power of $T$ is  given by
\begin{equation}
  T^t=\left(
    \begin{array}{cc}
      1-p+p\kappa^t&1-p-\kappa^t+p\kappa^t\\
      p-p\kappa^t &p+\kappa^t-p\kappa^t
    \end{array}
   \right).
\end{equation}
Note that \eq{transition} is retrieved at $t=1$. With a concise expression for $T^t$ at hand, the correlator $\mc C(\Delta t)=\mathbb E(B_tB_{t+\Delta t})-\mathbb E(B_t)\mathbb E(B_{t+\Delta t})$, which does not depend on $t$ by time homogeneity of the Markov chain, is computed straightforwardly, yielding
\begin{equation}\label{eq:correlation_function}
  \mc C(\Delta t)=4p(1-p)\,\kappa^{\lvert \Delta t \rvert}.
\end{equation}
That is, correlations decay exponentially, with $1/\ee$ correlation time
\begin{equation}\label{eq:correlation_strength}
  \tau=-\frac{1}{\ln \kappa}.
\end{equation}
Thus, the correlation time increases monotonically from $0$ to $\infty$ as the correlation strength $\kappa$ is increased from $0$ to $1$. (In the spatial fluctuator model, the correlation length $\la$ increases monotonically from $0$ to $\infty$ as $\kappa$ is increased from $0$ to $1$.)   

Even though $s_t=T^ts_0=s_0$ for all $t$, in the course of each physical run of a quantum circuit involving $m$ circuit layers, a nontrivial \emph{realization} $R\in\{(1,0)^\mrm{T},(0,1)^\mrm{T}\}^{m+1}$ of fluctuator states is sampled. Equivalently, we can say that, during each run, a realization bit string $b\in\{0,1\}^{m+1}$ of outcomes is obtained, where the $t$th entry of $b$, $b_t\in\{0,1\}$, denotes the outcome of a trial of the random variable $B_t$. The realization $b$ occurs with probability
\begin{equation}\label{eq:pb} p_b=[\delta_{b_0,0}(1-p)+\delta_{b_0,1}p]\prod_{t=0}^{m-1}T_{b_{t+1}b_t},
\end{equation}
using $p_b$ as shorthand notation for $p(B_0=b_0,B_1=b_1,\ldots)$. With the above expression, it can be seen with an explicit calculation that, also from the realization viewpoint, the marginalized, time-local probability that the fluctuator is excited at time $t$ is
\begin{equation}
  p(B_{t}=1)=\sum_{\{b\,:\,b_{t}=1\}}p_{b}=p,
\end{equation}
independent of $t$ and $\kappa$.

\subsection{Circuit fidelity}
Consider the circuit fidelity $F=\bra{\psi} \rho \ket{\psi}$ between the noiseless output state $\ketbra{\psi}{\psi}$ of a VQA's quantum circuit and the noisy output state $\rho$ of that same quantum circuit. If we assume that a single error operator $V$ (Figs.~\ref{fig:error_model},~\ref{fig:error_model_spatial}) occurring during the circuit leads to an output state that is orthogonal to the noiseless output state of that circuit, we have in the temporal fluctuator model that $F=(p_{0^{m}})^n$. Here, $p_{0^{m}}$ is the probability that no error occurs during the $m$ layers of the circuit, and $n$ is the number of qubits. The probability $p_{0^{m}}$ is computed straightforwardly with \eq{pb} \footnote{Because of the way we have defined the interaction between the fluctuator and the qubit, there are two ways in which no errors can occur; $p_{0^m}=\sum_{b_0}p_{b_0 0^m}$. By \eq{pb}, we thus have $p_{0^{m}}=(1-p)[1-p(1-\kappa)]^{m}+p(1-\kappa)(1-p)[1-p(1-\kappa)]^{(m-1)}$.}. For uncorrelated errors, we have $F=(1-p)^{mn}$, whereas for fully correlated errors, we have $F=(1-p)^n$. Thus, one may expect an increasing circuit fidelity with increasing correlation strength, and therefore an \emph{improvement} of the performance of VQAs with increasing correlation strength.

The corresponding formulas for the circuit fidelity $F$ in the spatial fluctuator model are obtained by interchanging $m$ and $n$. Thus, in the case of strong correlations and deep circuits ($m>n$), we expect a lower circuit fidelity in the spatial fluctuator model than in the temporal fluctuator model. Therefore, in the same case, we may expect a worse performance of VQAs in the spatial fluctuator model than in the temporal fluctuator model. Because $mn>m,n$, we expect an even worse performance in the case of fully uncorrelated errors.

However, since one error operator $V$ may cause an output state that is not exactly orthogonal to the noiseless output state, the above discussion merely puts a lower bound on the circuit fidelity. Furthermore, even if the circuit fidelity is exactly known, it merely puts a bound on the AR, which may be loose~\cite{beach2019making}. Additionally, it is not clear a priori how the variational parameters may change due to the noise and the correlations thereof. Finally, because of the breakdown of quantum error correction in the presence of strongly correlated noise, one might expect a \emph{decrease} in the performance of VQAs with increasing correlation strength. In the next section, we therefore derive analytical expressions for the effects of the fluctuator error models on the AR. These are valid in the regime of small error probability and do not account for any possible noise-induced change in the optimal variational parameters. In the section thereafter, we show numerical results that are valid outside the regime of small error probability and account for a possible noise-induced change in optimal variational parameters.

\subsection{Expectation values}\label{sec:expectation_values}
Directly after the fluctuator has transitioned from $b_{t-1}$ to $b_t$, the unitary error operator $V$ is applied to the associated qubit, conditioned on the state of the fluctuator. The transition and the controlled unitary do not change the reduced state of the fluctuator, but correlations are built up by the controlled unitary nevertheless. At the end of the circuit, all fluctuators are traced out. Let $U_1,\dots,U_m$ be the layers of a quantum circuit (where $U_1$ denotes the layer that is applied first) and let $\rho$ be its initial state. In QAOA, these layers will depend on the variational parameters $\bgamma,\bbeta$, but in this section, their dependence on these parameters is mostly suppressed.  The effect of a single fluctuator $f$ in the temporal fluctuator model, at the end of the circuit and  after tracing out the fluctuator, is described by the noisy circuit
\begin{align}\label{eq:noisy_circ}
  \tilde{\mc U}_f(\rho_0)=\sum_{b^f} p_{b^f}\, \tilde U_{b^f}^\nodagger \rho_0\, \tilde U^\dagger_{b^f}, &&
  \tilde U_{b^f}=\prod_{t=m}^{1}V_{b_t^f} U_t,
\end{align}
with $p_{b^f}$ as in \eq{pb} (after substituting $b\rightarrow b^f$), $\tilde U_{b^f}$ the circuit in case of noise realization $b^f$, and $\rho_0=(\ketbra{+}{+})^{\otimes {n}}$ the initial state. Here, $V_{b_t^f}$ is an operator acting immediately after time $t$ and immediately before the gates $U_{t+1}$ on the qubit associated with $f$, with $V_0\equiv\id$, $V_1\equiv V$. [In \eq{noisy_circ}, the product ordering is such that $U_1$ appears on the right. When $\tilde U_{b^f}$ is applied to a state, the first operator that acts on that state is $U_1$.]

Because the $n$ realization probability distributions $\{p_{b^f}\}_f$ are independent and identically distributed, their combined effect on the output state is
\begin{equation}\label{eq:noisy_circ_all}
  \tilde{\mc U}(\rho_0)=\sum_b p_b \tilde U_b^\nodagger \rho_0\, \tilde U_b^\dagger,
\end{equation}
with
\begin{equation}\label{eq:puv}
p_b=\prod_{f=n}^{1}(p_{b^f}),\ \tilde U_b=\prod_{t=m}^{1}(V_{b_t} U_t),\ V_{b_t}=\prod_{f=n}^1(V_{b_t^f}).
\end{equation}

In the spatial fluctuator model, we again have \eq{noisy_circ_all}, but with
\begin{equation}\label{eq:noisy_circ_spatial}
p_b=\prod_{t=m}^{1}(p_{b^t}),\ \tilde U_b=\prod_{t=m}^{1}(V_{b^t} U_t),\ V_{b^t}=\prod_{q=n}^1(V_{b_q^t}).
\end{equation}
In the spatial model, the different fluctuators carry the label $t$. Fluctuator $t$ transitions from the state $b^t_q$ to $b^t_{q+1}$ as it `moves' from qubit $q$ to $q+1$.

Returning to the temporal fluctuator model, we define the  susceptibility of the cost function to noise at optimal variational parameters (as found by QAOA) as  $\chi=(\dd \langle H \rangle /\dd p)_{p=0}$, with $\tr[\rho(\bgamma^*_0,\bbeta^*_0)H]\equiv \langle H \rangle$ for short. Using $\chi$, we may write the first order approximation to the AR, which we call the \emph{linearized AR}, as
\begin{equation}\label{eq:lin}
  AR^\mrm{lin}(p)=AR(0)+p \frac{\chi}{C^*},
\end{equation}
where we have by \eq{noisy_circ} that
\begin{equation}\label{eq:dhdp}
  \chi=\sum_b\left. \frac{\dd p_b}{\dd p}\right\rvert_{p=0}\tr\left(\tilde U_b^{\nodagger}\rho_0\,\tilde U^\dagger_b H \right).
\end{equation}
In this section, we are mainly interested in $\chi$, for which no approximations are made. The linearized AR, which approximates the AR up to an error $O(p^2)$ as $p$ goes to zero, is introduced for later reference.

Since $p_b$ [\eq{pb}] is some explicit polynomial in $p$ and $\kappa$, it is clear that the derivative with respect to $p$ at $p=0$ can be computed analytically. The resulting expression for $\chi$ is derived in \app{susceptibility}. The expression becomes especially clear in the limits of no correlation ($\kappa=0$) and full correlation ($\kappa=1$),
\begin{equation}\label{eq:chi}
\chi_\kappa^{(a)}=\lvert \mc B_{1+a\delta_{\kappa 1}}\rvert \{\langle H \rangle^{(1+a\delta_{\kappa 1})}-\langle H \rangle_{\boldsymbol 0}\},
\end{equation}
where $a=m$ for fully temporally correlated errors, and $a=n$ for fully spatially correlated errors. Here, $\mc B_\ell$ is the set of realizations $b$ where exactly one fluctuator has exactly one chain of $\ell$ contiguous excitations (no other excitations in any fluctuator present), $\langle H \rangle^{(\ell)}$ is the expectation value of $H$ given realization $b$, averaged over all $b\in\mc B_\ell$, and $\langle H \rangle_{\boldsymbol 0}$ is the expectation value of $H$ in the noiseless case.

For fully temporally correlated errors, we have the prefactor $\lvert \mc B_{1+m}\rvert=n$, for fully spatially correlated errors, $\lvert \mc B_{1+n}\rvert=m$, and for fully uncorrelated errors $\lvert \mc B_{1}\rvert=n(m+1)$. So, roughly speaking, there are two effects that determine whether we should expect $\chi_0>\chi_1$ or $\chi_0<\chi_1$. On the one hand, one expects that ($i$), on average, the detrimental effect of a fully correlated error, consisting of $m$ consecutive errors ($n$ adjacent errors in the case of spatially correlated errors) on $\langle H \rangle$ is larger than the detrimental effect of 1 error on $\langle H \rangle$. That is, one expects $(\langle H \rangle^{(m+1)} - \langle H \rangle_{\boldsymbol 0})>(\langle H\rangle^{(1)}-\langle H \rangle_{\boldsymbol 0})$ and $(\langle H \rangle^{(n+1)} - \langle H \rangle_{\boldsymbol 0})>(\langle H\rangle^{(1)}-\langle H \rangle_{\boldsymbol 0})$. On the other hand, ($ii$) there are fewer ways in which a fully correlated error can happen: a fully temporally correlated error can happen in $n$ different ways and a fully spatially correlated error can happen in $m$ different ways, whereas a single uncorrelated error can happen in $nm$ ways. [The latter still leads to a prefactor of $\lvert \mc B_1\rvert=n(m+1)$ because of the way $\langle H \rangle^{(1)}$ is defined.] If the effect ($ii$) is dominant, we expect that for deep circuits ($m>n$), the AR is highest with fully temporally correlated errors, followed by the AR with fully spatially correlated errors, followed by the fully uncorrelated errors. This order of ARs would be consistent with the expectations that follow from the fidelity arguments in the previous section.

\section{Numerical methods and results}\label{sec:numerical_methods_and_results}
In this section, we present numerical data on the SWAP-network implementation of QAOA (\fig{QAOA}) under the influence of the temporal and spatial fluctuator models. As the error operator $V$ (Figs.~\ref{fig:error_model},~\ref{fig:error_model_spatial}), we choose the bit-phaseflip error or Pauli-$Y$ error. This choice is motivated by the fact that $Y$ is the only nontrivial single-qubit Pauli operator that commutes with neither $RZZ'$ nor $RX$. To exclude the possibility that an increase in AR as a function of $\kappa$ is due to $Y^2=\id$, which may occur, for example, due to $Y$ operators being inserted at $t=1$ and $t=2$ on the first qubit, we include an interaction between the fluctuator and its associated qubit only after the qubit was acted on by a $RZZ'$ or $RX$ gate. [Since in SK models $\om_i=0$, the entire circuit (excluding measurement) consist of $RZZ'$ and $RX$ gates, as in \fig{QAOA}.]

Simulations were performed for 16 random SK instances on $n=6$ qubits, with $r=3$ cycles (leading to $m=21$ circuit layers) and various values of $p$ and $\kappa$. For each data point, that is, for each combination of SK instance, type of fluctuator model, $p$ and $\kappa$, a basin-hopping routine~\cite{wales1997global} was run 32 times to heuristically optimize the cost function landscape $\tilde C$. The meta-parameters of the routine were the default values as per SciPy~\cite{scipybasinhopping}, but with 4 iterations and all initial parameters $\bgamma,\bbeta$ chosen at random in the interval $[-0.5\cdot 10^{-3},0.5\cdot 10^{-3})$.  The randomness of the optimization routine on the one hand, and the regularity of the outcomes on the other, indicates that the optimization routine consistently found the global optimum, as is discussed in detail in the Supplemental Material \footnote{All code and data used to generate the results in this work are presented in the Supplemental Material [URL will be inserted by publisher] and on \url{https://github.com/kattemolle/HQAOA}.}.

The mixed state $\rho'$ at the end of the quantum circuit was obtained by full density-matrix simulation of all qubits and fluctuators, tracing out all fluctuators before computing the cost function $\tilde C=\tr(\rho' H)$ (dependence on variational parameters suppressed). With the full density matrix at hand, the computation of the cost function at given $p,\kappa$ and variational parameters is free of shot noise and thus essentially exact. The simulator itself was obtained by extending the simulator used in Ref.~\cite{kattemolle2021variational} with the functionality of density-matrix simulation of mixed quantum-classical registers. All code used to generate the results in this work, including the code of the simulator, is available as Supplemental Material~\cite{Note3}.

We find qualitatively equal results for all 16 random problem instances. In this section, we report on the results for one typical instance. Details on its typicality and on the other 15 instances are found in Appendix~\ref{sec:instance_properties} and the Supplemental Material~\cite{Note3}.

We first present the numerical results regarding the noise-adaptivity, followed by our results on the dependence of the AR on the correlation strength $\ka$. Figure~\ref{fig:p_plot} shows the AR as a function of $p$ at fixed $\kappa\in\{0,1\}$ for the typical instance, obtained under the temporal and spatial fluctuator models. Increasing $p$ away from $0$, we observe a linear decrease of the AR, with excellent agreement between the AR, the noise-unaware AR, and the linearized AR. In this sense, no robustness of QAOA against errors of the three types is observed around $p=0$.

\begin{figure}
  \begin{center}
    \includegraphics{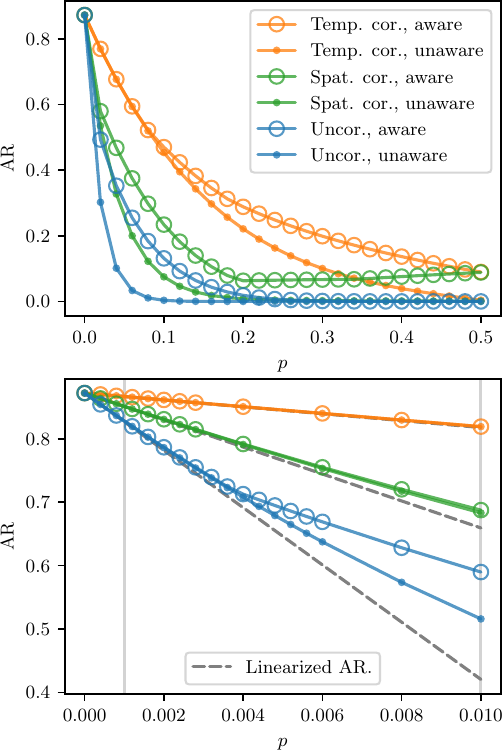}
  \end{center}
  \caption{\label{fig:p_plot} The approximation ratio (AR), obtained by the noisy SWAP-network implementation of QAOA for a typical problem instance (see Appendix~\ref{sec:instance_properties}), for $0\leq p \leq 0.5$ (top) and $0\leq p\leq0.01$ (bottom), at fixed correlation strength $\ka$. Results are shown for fully temporally correlated errors (orange), fully spatially correlated errors (green), and fully uncorrelated errors (blue). Open circles show the noise-aware AR [\eq{AR}], whereas the filled circles show the noise-unaware AR [\eq{AR0}]. The linearized AR [\eq{lin}] is displayed as a dashed gray line for visibility, irrespective of the noise model and correlation strength. Light gray lines at $p\in\{0.001,0.01\}$ are for later reference. Error bars are absent because of the use of full density-matrix simulation. The AR in the case of random guessing is exactly zero (\sec{QAOA}).}
\end{figure}

For uncorrelated noise, a divergence between the AR and the noise-unaware AR is observed at $p_\mrm{crit}^\mrm{uncor}=(3.9\pm 0.1) \times 10^{-3}$. This coincides with an abrupt jump in the otherwise slowly varying optimal parameters (see Supplemental Material~\cite{Note3}). This shows that, as $p$ is increased away from $0$, initially no other local minimum of the cost function becomes available and that the location of the initial minimum is approximately constant. For $p>p_{\mrm{crit}}^{\mrm{uncor}}$, for the first time, a remote lower local minimum becomes available. This new local minimum remains the lowest local minimum as found by QAOA until $p=0.015\pm 0.005$. Similar first critical points, not visible in the plot because of the scale, occur at  $p_{\mrm{crit}}^{\mrm{temp}}=0.077\pm0.001$ for fully temporally correlated noise, and at $p_{\mrm{crit}}^{\mrm{spat}}=0.013\pm0.001$ for fully spatially correlated noise. For noise that would break the symmetry of the cost function landscape, a divergence between the noise-aware AR and the noise-unaware AR would already be expected at $p=0$ \cite{fontana2021evaluating}.

An increase of the noise-aware AR in the case of fully spatially correlated errors is seen after $p\approx 0.2$. This effect may arise because the noise creates a mixture of states, some of which may have a favorable cost. Then, increasing the error probability may in some cases lead to an increase in the AR. This effect is perhaps best illustrated by a simple example. Consider a trivial version of QAOA, with a single qubit in the initial state $\ket +$, no variational circuit, and Hamiltonian $H=Z$, under the effect of the error channel $\rho\mapsto (1-p)\rho+RY(\pi/2)\, \rho\,RY^\dagger(\pi/2)$. Then, with $RY(\pi/2)\ket +=\ket 1$, the cost function `landscape' after the error channel is $\tilde C=-p$. That is, in this example, $AR(p)=p$.

We now discuss the numerical data and results on the dependence of the AR on the correlation strength. Our conclusion that error correlations can be beneficial for VQAs such as QAOA is based on these data, together with the analytical results of \sec{expectation_values} and the additional data in the Supplemental Material~\cite{Note3}. In addition to data on uncorrelated errors, \fig{p_plot} shows data on the extreme cases of fully temporally and fully spatially correlated errors. Out of the three types of correlation (uncorrelated, temporal, spatial), the AR is highest for fully temporally correlated errors. Second in AR comes the fully spatially correlated errors, followed by the AR for uncorrelated errors. This shows that, for the cases studied here, effect ($ii$) in \sec{expectation_values} is dominant for small $p$. The order of the ARs remains unchanged for all $p$ considered, indicating that the same effect plays a role for all $p$ \footnote{A single exception occurs for one instance at $p\geq 0.4$. Here, the noise-aware AR in the case of fully spatially correlated errors becomes higher than the noise-aware AR in the case of fully temporally correlated errors. The noise-aware AR remains higher for correlated errors. See the Supplemental Material for details.}.

After having discussed these extreme cases, we now take a closer look at the effects of partially correlated errors. Figure~\ref{fig:kappa_plot} shows the AR as a function of $\kappa$ at fixed $p\in\{0.001,0.01\}$ for both fluctuator models. These values for $p$ are chosen because typical two-qubit error probabilities roughly fall within the range $0.001 \lesssim p \lesssim 0.01$~\cite{ballance2016high,xue2022quantum,kjaergaard2020superconducting}. The data suggest that the noise-aware AR is highest for temporally correlated errors, and that the increase in the AR as a function of $\kappa$ is monotonic for all fluctuator models and $p$ considered. At $p=0.001$, excellent agreement between the noise-aware AR, the noise-unaware AR, and the linearized AR is shown for all $0\leq\kappa\leq 1$ considered. At $p=0.01,\kappa=0$, the agreement between the AR, the noise-unaware AR, and the linearized AR has been broken, as was already apparent from \fig{p_plot}. In \fig{kappa_plot}, we see how the first critical $p$ depends on $\kappa$; $p=0.001$ is below $p_{\mrm{crit}}^\kappa$ for both fluctuator models and any $0\leq\kappa\leq 1$. In the spatial fluctuator model, $p=0.01$ is \emph{above} $p_{\mrm{crit}}^{\mrm{spat},\kappa}$ for all $0\leq\kappa\leq 1$. In the temporal fluctuator model, $p=0.01$ is below $p_{\mrm{crit}}^{\mrm{temp},\kappa}$ for $\kappa$ below roughly~$0.7$, and above  $p_{\mrm{crit}}^{\mrm{temp},\kappa}$ for $\kappa$ above roughly~$0.7$.

\begin{figure}
\vspace{1em}
  \begin{center}
\noindent\begin{minipage}{0.73\columnwidth}
  \includegraphics{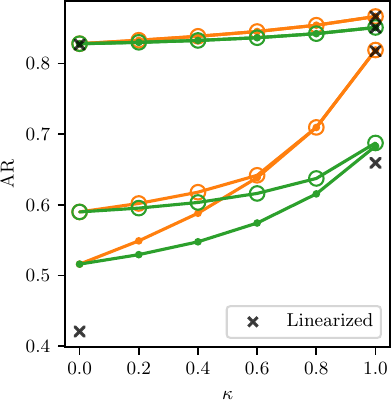}
\end{minipage}
\hfill
\begin{minipage}{0.25\columnwidth}\raggedright
  \vspace{.6em}
$\left.\begin{array}{l}
         \vspace{.5em}
\end{array}\right\}p=0.001
$
\vspace{.5em}
\noindent$\left.\begin{array}{l}
                  \vspace{14em}
\end{array}\right\}p=0.01
$ \\
\vspace{3.5em}
\end{minipage}
\end{center}
\vspace{-1em}
      \caption{\label{fig:kappa_plot} The approximation ratio (AR) for a typical problem instance (see Appendix~\ref{sec:instance_properties}), as a function of the correlation strength $\kappa$, at fixed $p=0.001$ (upper lines and crosses) and $p=0.01$ (lower lines and crosses). Note that the correlation time $\tau$ and the correlation length $\la$ are monotonically increasing functions of $\kappa$ [\eq{correlation_strength}]. At $\kappa=0$, errors are fully uncorrelated, whereas at $\kappa=1$, errors are fully correlated. The values for $p$ correspond to the vertical lines in \fig{p_plot}. Definitions are as in \fig{p_plot}, except for the linearized AR, which is now displayed with crosses. At $\kappa=0$, the temporal fluctuator model is equivalent to the spatial fluctuator model, and hence only two black crosses are shown at $\kappa=0$, the upper one showing the linearized AR at $p=0.001$, and the lower one showing the same at $p=0.01$. For visibility, at $\kappa=1$, crosses are black irrespective of the type of fluctuator model and $p$. From top to bottom, they describe the linearized AR at $p=0.001$ for the temporal fluctuator model and spatial fluctuator model, followed by the linearized AR for the temporal fluctuator model and spatial fluctuator model at $p=0.01$.}
\end{figure}

\section{Discussion}\label{sec:discussion}
We studied the performance of QAOA under temporally and spatially correlated errors using a physically inspired toy error model. In the model for temporally correlated errors, each qubit interacts with one independent classical fluctuator during a quantum computation. In the model for spatially correlated errors, after each gate time, all qubits interact with a single fluctuator that is reset before the next gate time. As the fluctuator moves through time (temporal model) or space (spatial model), it undergoes an internal time evolution described by a Markov process. Using the Markov formulation, we showed that the spacetime-local marginalized error probability $p$ is independent of the correlation strength of the Markov chain, and how the latter can be varied from zero to infinity.

We showed that, to first order in the local error probability $p$, the effect of the fluctuator error models on QAOA's cost function landscape has two competing factors. On the one hand, ($i$) the detrimental effect of a single correlated error may be worse than the effect of a single uncorrelated error, but on the other, ($ii$) there are far fewer ways in which a correlated error may act. For example, a fully temporally correlated error (that is, the error operation $V$, defined in Figs.~\ref{fig:error_model} and \ref{fig:error_model_spatial}, is inserted after each gate time) during a circuit acting on $n$ qubits can happen in $n$ ways, a fully spatially correlated error (i.e., an error operation $V$ is inserted at every qubit) during a circuit of depth $m$ can happen in only $m$ ways, whereas a fully uncorrelated error (i.e., a single error operation $V$ is inserted anywhere in the circuit) can happen in $nm$ ways.

We numerically simulated QAOA for 16 random SK problem instances on 6 qubits. For all instances, we observed an increase in the performance of QAOA with correlation strength at all $p$, indicating that effect ($ii$) is dominant \footnote{There are two instances where a single exception occurs, but these exceptions can be attributed to incomplete optimization. See the Supplemental Material for details.}. As a separate effect, in all instances and at all correlation strengths, we observed the existence of critical local error probabilities after which the noise-aware approximation ratio is higher than the noise-unaware approximation ratio. This shows QAOA can in some cases adapt to the effects of uncorrelated and correlated errors, given that error probabilities are \emph{above} a threshold. It remains an open question how this threshold behaves as the number of qubits is increased to the numbers required for a practical quantum advantage. The noise resilience of QAOA may be seriously limited if the noise-adaptivity threshold increases with $n$.

Our work does not show that correlated errors are beneficial to QAOA; adding correlated errors \emph{on top} of local errors will almost certainly result in a reduction in the performance of QAOA. What our work indicates is that it is not correlation in itself that has a negative effect on QAOA. To improve current hardware to the point that it can demonstrate a useful quantum advantage, a reduction of both uncorrelated and correlated errors remains necessary.

The error model used in this work is minimalistic, but we do not expect that the qualitative results of our work depend on the details of more complicated or realistic noise models. A first limitation of our noise model is that it is ultimately a stochastic unitary error model, where unitary error operations are inserted into the circuit according to some (correlated) probability distribution [\eq{noisy_circ_all}]. Not all quantum channels are of this type, such as the amplitude damping channel, but they can be transformed into an effective stochastic error model (specifically, a stochastic Pauli error model) by the process of Pauli twirling \cite{berg2022probabilistic,flammia2022averaged}. As a result, if QAOA is run on a real quantum device with added Pauli twirling operations, the physical noise experienced during the quantum computation will effectively be formed by stochastic Pauli errors, thereby also reinstating the symmetries of the SK QAOA cost function landscape derived in \app{symmetries}.

A second limitation of our noise model is that all stochastic error unitaries inserted into the circuit are equal. Although allowing, e.g., gate-dependent errors would result in a more realistic noise model, we expect that our results would not be qualitatively affected. This is because we expect the states during the ansatz circuit to be sufficiently spread out over the Hilbert space, and to vary enough from circuit layer to circuit layer, that the choice of error operation $V$ is irrelevant as long as it does not fully commute with the gates in the ansatz circuit. If, on the other hand, the error operator were to commute with, for example, the $RZZ$ gates, it would effectively lead to one layer of error operations per circuit cycle, coarsening the error model.

A final limitation of our error model is that it only considers error correlations with an exponentially decaying correlation function [\eq{correlation_function}]. This does not include scenarios in which the correlations decay as a power law. However, it is important to note that if our model were to be expanded to accommodate power-law correlations, it would be equivalent to the current model in the cases of uncorrelated and fully correlated errors, therefore leaving \fig{p_plot} unaltered. The same holds for errors with a $1/f$ spectral density. Additionally, $1/f$ noise has a correlation function proportional to $\mrm{Ei}(-f_0 \lvert t\rvert)=O(\ee^{-f_0\lvert t \rvert})$ \cite{burkard2009non-markovian}, with $f_0$ a low-frequency cutoff. Hence, a low-frequency cutoff introduces a timescale that confines correlations to within a finite time span, much like the correlations in our model.

Lines for future work include the study of the generality of our results. It is a priori not clear if similar effects hold in more general models of temporally and spatially correlated noise, or if they hold for other VQAs. If they do, the experimental requirements on noise correlation strengths may not be as stringent as previously thought, bringing a useful quantum advantage within closer reach.\\

Code and data are available as Supplemental Material~\cite{Note3}.
\begin{acknowledgements}
We acknowledge funding from the Competence Center Quantum Computing Baden-W\"urttemberg, under the project QORA. Classical simulations were carried out on the Scientific Compute Cluster of the University of Konstanz (SCCKN).
\end{acknowledgements}

\bibliography{bib.bib}

\appendix

\section{Cost function landscape noise susceptibility}\label{sec:susceptibility}
Here, we analytically compute $\chi$ [\eq{dhdp}], the derivative of the cost function landscape $\tilde C$ [\eq{cost_function_landscape}] (at fixed variational parameters) to the local error probability $p$ at $p=0$.  The only assumption on $\tilde C$ is that it is obtained using a circuit that is acted upon by our fluctuator models with noise parameters $p$ and $\kappa$ (Figs.~\ref{fig:error_model},~\ref{fig:error_model_spatial}).

Let us first consider a single fluctuator $f$. Denote the noise realizations of that fluctuator by the bit string $b^f$. Computing the derivative of \eq{pb}, we find
\begin{align}\label{eq:dpbdp}
  \left.\frac{\dd p_{b^f}}{\dd p}\right\rvert_{p=0}=&\sum_{\ell=1}^{m+1}\sum_{i=0}^{m+1-\ell}\delta_{b^f,0^i 1^\ell 0^{k_{i\ell}}}(1-\kappa)^{2-\delta_{i,0}-\delta_{k_{i\ell},0}}\kappa^{\ell-1}\nn\\
&-\delta_{b^f,\boldsymbol 0}[m(1-\kappa)+1],
\end{align}
with $k_{i\ell}=m+1-i-\ell$ the number of $0$s that need to be appended to the bit string $0^i1^\ell$ to make it a bit string of length  $m+1$.

Moving to $n$ fluctuators, we have by \eq{puv} that
\begin{equation}
  \left.\frac{\dd p_b}{\dd p}\right\rvert_{p=0}=\sum_{f=1}^n \delta_{b^1\ldots b^{f-1},\boldsymbol 0} \left(\frac{\dd p_{b^f}}{\dd p}\right)_{p=0}\delta_{b^{f+1}\ldots b^n,\boldsymbol0}.
\end{equation}
Then, with the definition $\langle H \rangle_b=\tr(\tilde U^\nodagger_b \rho_0\, \tilde U^\dagger_b H ) $, we have by \eq{dhdp}, that
\begin{align}
\chi=&\sum_{b,f,\ell,i}\tilde\delta_{b,f,\ell,i}(1-\kappa)^{2-\delta_{i,0}-\delta_{k_{i,\ell},0}}\kappa^{\ell-1}\langle H \rangle_b\nn\\
  &-n[m(1-\kappa)+1]\langle H \rangle_{\boldsymbol{0}},
\end{align}
with $\tilde \delta_{b,f,\ell,i}=\delta_{b^1\ldots b^{f-1},\boldsymbol 0} \delta_{b^f,0  ^i1^\ell 0^{k_{i\ell}}}\delta_{b^{f+1}\ldots b^n,\boldsymbol0}$, and where the sum is over all $b,i,j,f$, in the ranges as described before. Note $\boldsymbol 0=00\ldots 0$ has varying dimension depending on context. We can write the previous expression for $\chi$ more schematically as
\begin{align}
 \chi=\sum_b&\left[(1-\kappa)^{h(b)}\kappa^{\ell (b)-1}\langle H \rangle_b\right]\nn\\&-n[m(1-\kappa)+1]\langle H \rangle_{\boldsymbol 0},
\end{align}
where the sum is over all realizations $b$ where exactly one fluctuator has exactly one chain of consecutive excitations (and no other excitations), $h(b)\in\{0,1,2\}$ is the number of transitions $0\leftrightarrow 1$ in $b$, and $\ell(b)\geq 1$ is the length of the chain.

The average contribution of chains of errors of length $\ell$ to $\chi$ schematically reads $\langle H \rangle^{(\ell)}\equiv \frac{1}{\lvert \mc B_\ell \rvert}\sum_{b\in\mc B_\ell}(1-\kappa)^{h(b)}\langle H \rangle_b$, where the set $\mc B_\ell$ contains those realizations $b$ where exactly one fluctuator has exactly one chain of excitations of length $\ell$, and where we have absorbed the boundary effects $(1-\kappa)^{h(b)}$ into $\langle H \rangle ^{(\ell)}$. With this definition, the susceptibility becomes
\begin{equation}
\chi=\sum_{\ell=1}^{m+1}\left(\lvert \mc B_\ell \rvert \kappa^{\ell-1}\langle H \rangle^{(\ell)}\right)-n[m(1-\kappa)+1]\langle H \rangle_{\boldsymbol 0}.
\end{equation}

\section{Symmetries of the QAOA SK cost function landscape}\label{sec:symmetries}
Here, we prove that a certain set of transformations of the parameters of the QAOA SK cost function landscape generates a symmetry group of the cost function landscape. Note that the SK Hamiltonian is given by \eq{Hamiltonian} with $\om_{ij}\in\{1,-1\},\ \om_i=0$ and that hence the Hamiltonian itself possesses a global  $\mathbb Z_2$ symmetry. Additionally, because $\om_i=0$ the ansatz circuit [\eq{ansatz}, \fig{QAOA}] does not contain $RZ$ gates. The numerical data available in the Supplemental Material~\cite{Note3} suggest that the transformations derived in the current appendix generate the entire symmetry group of the cost function landscape. The symmetries are not broken under a general class of noise models that we call the spatiotemporally correlated Pauli noise channels. The main application of the results in this section is that they allow for the comparison of optimal variational parameters modulo the symmetries of the cost function landscape.

\subsection{Symmetry group}

The class of Pauli channels contains all channels of the form
\begin{equation}
\Lambda(\rho)=\sum_a p_a P^\nodagger_a \rho\, P^\dagger_a,
\end{equation}
with $p_a$ a probability distribution, and $\{P_a\}_a$ tensor products of the Pauli operators $\{\id,X,Y,Z\}$, also referred to as Pauli words. Note that Pauli channels have a stochastic interpretation: with probability $p_a$, the Pauli word $P_a$ acts on $\rho$. We generalize this to circuits with multiple layers and insert Pauli words according to a correlated probability distribution.

\begin{definition}[Spatiotemporally correlated Pauli channels]
  Let $U=U_g \ldots U_1$ be a noiseless quantum circuit consisting of the unitaries $\{U_i\}_i$. The spatiotemporally correlated Pauli channels are of the form
  \begin{equation}
    \Lambda(\rho)=\sum_a p_a \tilde U^\nodagger_a \rho\, \tilde U_a^\dagger,
  \end{equation}
  with
  \begin{equation}\label{eq:spatiotemp_pauli_chan}
    \tilde U_a= \prod_{i=g}^1(P_{a_i}U_i)
  \end{equation}
the circuit realization in case of noise realization $a$.
\end{definition}
Spatiotemporally correlated Pauli channels effectively insert Pauli operations into the circuit stochastically and can hence be considered to be a form of stochastic unitary error, where all unitaries are Pauli operators and where the probability distribution may be correlated in space and time. The fluctuator models of the main text are spatiotemporally correlated Pauli channels if the error operation $V$ is a Pauli operator. Note that if a set of transformations generates a symmetry group of the cost function landscape, also the inverses and any composition of those transformations form symmetries of the cost function landscape.

\begin{lemma}\label{thm:symmetries}
Any $n$-qubit QAOA SK cost function landscape $\tilde C(\bgamma,\bbeta)$, obtained using any number of SWAP gates  while the ansatz circuit was acted upon by any spatiotemporally correlated Pauli channel, is invariant under the group generated by the following transformations.
\begin{enumerate}
\item Translation of any $\gamma_k$ by $2\pi$,
\item Translation of any $\beta_k$ by $1\pi$.
\item Negation of any $\beta_k$ (i.e., $\beta_k\mapsto -\beta_k$) and simultaneous translation of $\gamma_k$ and $\gamma_{k+1}$ by $\pi$. (For the last $\beta$ parameter, that is, the edge case $k=r$, only $\gamma_k=\gamma_r$ needs to be translated by $\pi$.)
\item Simultaneous negation of all parameters.
  \end{enumerate}
\end{lemma}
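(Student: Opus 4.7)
My plan is to verify each of the four generators separately; closure under composition is then automatic. For each generator, the strategy is to rewrite the modified noise-conditioned Kraus operator $\tilde U_a'$ in terms of the original $\tilde U_a$, up to a factor that is either a global phase or an operator that commutes with everything it meets on its way to $H$ at the end of the circuit. Noise is handled uniformly throughout: whenever an injected Pauli $P$ must be commuted past a noise Pauli $P_{a_i}$, the resulting sign $\pm 1$ appears once in $\tilde U_a$ and once in $\tilde U_a^\dagger$, so these signs cancel pairwise.

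Symmetries 1, 2, and 4 are the straightforward ones. For Symmetry 1, $\omega_{ij}^2=1$ yields $\exp(-\ii\pi\omega_{ij}Z_iZ_j)=-\id$, so $\exp(-\ii(\gamma_k+2\pi) H/2) = (-1)^{\binom{n}{2}}\exp(-\ii\gamma_k H/2)$ is a pure global phase. For Symmetry 2, I would rewrite $\exp(-\ii(\beta_k+\pi)\sum X_i /2)=(-\ii)^n X^{\otimes n}\exp(-\ii\beta_k\sum X_i /2)$ and push the injected $X^{\otimes n}$ through to $H$, using that $X^{\otimes n}$ commutes with $H$ (each $X_iX_j$ commutes with $Z_iZ_j$), with every subsequent $X$- and $H$-exponential, and (anti)commutes with every Pauli noise word with the sign cancellation described above. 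For Symmetry 4, since $\sum X_i$ and $H$ are real symmetric in the computational basis, both $X$- and $H$-exponentials satisfy $V(\theta)^* = V(-\theta)$, so the modified circuit is $U'=U^*$; combined with the realness of $\ket{+}^{\otimes n}$ and $H$, the modified cost function equals the complex conjugate of the original, which is automatically real. For the noisy case, a Pauli word $P_a$ satisfies $P_a^*=\pm P_a$ with sign determined by the parity of $Y$'s in $P_a$, and this sign cancels between $\tilde U_a^*$ and $(\tilde U_a^*)^\dagger$.

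Symmetry 3 is the one I expect to be the main obstacle. I would set $\Omega:=\exp(-\ii\pi H/2)$ and use $\omega_{ij}^2=1$ together with the mutual commutation of the $Z_iZ_j$'s to obtain
\begin{equation*}
\Omega \;=\; (-\ii)^{\binom{n}{2}}\Bigl(\textstyle\prod_{i<j}\omega_{ij}\Bigr)\prod_i Z_i^{\,n-1},
\end{equation*}
which is a scalar multiple of $Z^{\otimes n}$ for $n$ even and a scalar multiple of $\id$ for $n$ odd. The generator inserts one $\Omega$ between the $H$-exponential of cycle $k$ and the $X$-exponential with angle $-\beta_k$, and another between the $H$-exponential of cycle $k+1$ and the $X$-exponential with angle $\beta_{k+1}$ (only the first when $k=r$). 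I would then use $\Omega\,\exp(-\ii\beta\sum X_i /2)\,\Omega^{-1} = \exp(-\ii(-1)^{n-1}\beta\sum X_i/2)$, which follows from $X_i$ anticommuting with exactly $n-1$ of the $Z_iZ_j$'s, together with $\Omega^2=(-1)^{\binom{n}{2}}\id$, to commute one $\Omega$ past the $-\beta_k$ exponential and combine it with the other into a global phase, recovering $U$ up to a phase for $n$ even. For $n$ odd, $\Omega$ is already central and the two $\gamma$-shifts contribute only a phase; the residual transformation is $\beta_k\mapsto-\beta_k$ alone, which I would close by combining Symmetry 4 with the observation that, for odd $n$, each individual $\gamma_j\mapsto\gamma_j+\pi$ is itself a global-phase symmetry, so the $\gamma$-negations produced by Symmetry 4 can be reabsorbed modulo $\pi$ to leave only a single $\beta_k$-negation.
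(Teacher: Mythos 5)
Your treatment of generators 1, 2, and 4, and of the noise Paulis (pairwise sign cancellation between $\tilde U_a$ and $\tilde U_a^\dagger$), matches the paper's argument and is correct. For generator 3 with $n$ even, your $\Omega=\exp(-\ii\pi H/2)\propto Z^{\otimes n}$ computation is a compact repackaging of the paper's gate-by-gate insertion of $Z\otimes Z$ after every $RZZ$ gate, and it goes through, provided you keep the individual $Z_iZ_j$ factors attached to their gates long enough to commute them past SWAPs and past noise Paulis inserted in the middle of a Hamiltonian stage, exactly as the paper does.

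The genuine gap is your odd-$n$ branch of Symmetry 3. You correctly find that $\Omega\propto\id$ for odd $n$, so the claim reduces to the assertion that $\beta_k\mapsto-\beta_k$ alone is a symmetry, but your proposed closure fails: Symmetry 4 negates \emph{all} parameters, and a negation $\gamma_j\mapsto-\gamma_j$ differs from $\gamma_j$ by $2\gamma_j$, which is not a fixed multiple of $\pi$ and hence cannot be ``reabsorbed modulo $\pi$'' by the translation symmetries; undoing the negations of the other $\beta_j$ is moreover circular. No repair is possible, because the reduced claim is false: for $n=3$, $r=1$, all $\omega_{ij}=1$, a direct computation gives $\langle Z_1Z_2\rangle=\frac{1}{2}\sin(2\beta)\sin(2\gamma)+\sin^2\!\beta\,\sin^2\!\gamma$, whose first term is odd under $\beta\mapsto-\beta$ and invariant under $\gamma\mapsto\gamma+\pi$, so generator 3 is violated. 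The lemma therefore holds only for even $n$; you have in fact located a hidden assumption in the paper's own proof, where the step ``every qubit is now acted on by an uneven number of $Z$-gates'' relies on each qubit sitting in $n-1$ $RZZ$ gates per cycle, which is odd only when $n$ is even (the paper's numerics use $n=6$). The correct move is to state the restriction to even $n$ rather than to attempt to close the odd case.
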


\begin{proof}
Let us first assume the QAOA circuit is implemented in the absence of noise. In the following, we consider two states or two unitaries equal if they are equal up to a global phase. The first generator follows directly from the periodicity of $RZZ(\gamma)$ with period $2\pi$. For the second generator, note that $RX(\beta\pm\pi) =X\,RX(\beta)$.  Thus, sending $\beta_k\mapsto \beta_k\pm\pi$ has the same effect on the ansatz state as adding a layer of $X$ gates immediately before the layer of $RX$ gates of the cycle $k$. Now, note that the layer of $X$ gates commutes with all $RZZ$ gates, any SWAP gates, and all layers of $RX$ gates at all cycles of the ansatz circuit. Thus, we may commute the layer of $X$ gates all the way to the beginning of the circuit and act with it on the initial state, resulting in $X^{\otimes n}\ket +^{\otimes n}=\ket +^{\otimes n}$. Thus, the ansatz state $\ket{\bgamma,\bbeta}$ is invariant under $\beta_k\mapsto \beta_k\pm\pi$ for any $k$, and consequently $\tilde C(\bgamma,\bbeta)$ is invariant under these transformations.

For the third generator, note that $RZZ[\pm'(\gamma\pm''\pi)]=(Z \otimes Z)\,  RZZ[\pm'(\gamma\pm''\pi)]$. (The signs $\pm'$ and $\pm''$ are independent.) Thus, sending $\gamma_k\mapsto \gamma_k\pm'\pi$ has the same effect as adding $Z\otimes Z$ to the circuit immediately after each $RZZ$ gate in the Hamiltonian stage of the cycle $k$. Note that $Z_iZ_j$ commutes with $RZZ_{kl}(x)$ for all $x,i,j,k,l$, and that any SWAP gate can be seen as permuting qubit indices, not changing the number of times a qubit is acted on by an $RZZ$ gate. We may now commute all excess $Z$ gates that arose from the shift $\gamma_k\mapsto \gamma_k\pm''\pi$ to immediately before the layer of $RX$ gates of cycle $k$. At this layer, each qubit is now acted on by an odd number of $Z$ gates, which is equal to the situation where each qubit is acted on by a single $Z$ gate. We can do the same for the cycle $k+1$ of the ansatz circuit, sending  $\gamma_k\mapsto \gamma_k\pm''\pi$, but now moving the resulting $Z$ gates to immediately before the layer of $RX$ gates of the cycle $k$. Using $Z_i\,RX_i(\beta_k) Z_i=RX_i(-\beta_k)$ on all qubits proves the standard case of generator 3. For the edge case $k=r$, sending $\gamma_r\mapsto \gamma_r\pm'' \pi$ results in a layer of $Z$ gates immediately before the last layer of $RX$ gates. Commuting this layer through the last layer of $RX$ gates sends $\beta_r$ to $-\beta_r$. Noting that the layer of $Z$ gates commutes through (any term of) $H$, and thus that $\bra{\bgamma,\bbeta}Z^{\otimes n} H Z^{\otimes n} \ket{\bgamma,\bbeta}=\bra{\bgamma,\bbeta} H \ket{\bgamma,\bbeta}=\tilde C(\bgamma,\bbeta)$, completes the proof of generator 3 (in the noiseless case).

The fourth generator follows because (in the computational basis) ($i$) any gate in the ansatz is generated by a real Hamiltonian, ($ii$) the Hamiltonian used in QAOA is real, and ($iii$), the initial state is real. Both $U(\theta)=RX(\theta)$ and $U(\theta)=RZZ(\theta)$ are generated by real Hamiltonians [i.e., statement ($i$) holds]. Thus, for both operators, $U^{T}(\theta)=U^\dagger(-\theta)$ and, likewise, $U^{\dagger \mrm{T}}(\theta)=U(-\theta)$. From ($ii$) and ($iii$), it follows that $H^\mrm{T}=H$ and $(\ket +^n)^\mrm{T}=\bra{+}^n$. Let the sequence of gates in the ansatz be given by $\Pi_{i=g}^{1}U_i(\theta_i)$, with $g$ the number of gates in the ansatz, and note that trivially $x=x^\mrm{T}$ for $x$ a real number.  Then,
\begin{align}
  \tilde C(\bgamma,\bbeta)&=(\bra +^n [\Pi_{i=1}^{g} U^{\dagger}_i(\theta_i)] H  [\Pi_{i=g}^{1} U_i(\theta_i)] \ket +^n)^\mrm{T}\nn\\
                   &=\bra +^n[\Pi_{i=1}^{g}U^{\dagger}_i(-\theta_i)] H [\Pi_{i=g}^{1}U_i(-\theta_i)] \ket +^n\nn\\
  &=\tilde C(-\bgamma,-\bbeta).
\end{align}
This completes the proof of Lemma~\ref{thm:symmetries} in the noiseless case.

In case the ansatz circuit is acted upon by a spatiotemporally correlated Pauli error channel,
\begin{align}
  \tilde C(\bgamma,\bbeta)=&\sum_a p_a \bra +^{\otimes n} \tilde U_a^\dagger(\bgamma,\bbeta) H\,\tilde U^\nodagger_a(\bgamma,\bbeta) \ket +^{\otimes n}\nn\\
  :=&\sum_a p_a \tilde C_a(\bgamma,\bbeta).
\end{align}
So, to show that $\tilde C(\bgamma,\bbeta)$ is invariant under the transformations generated by generators 1--4, it suffices to show that $\tilde C_a(\bgamma,\bbeta)$, the cost function in case of realization $a$, is invariant under those transformations for all noise realizations $a$.

Transformation 1 trivially leaves $\tilde C_a(\bgamma,\bbeta)$ invariant under any spatiotemporally correlated Pauli error channel. To show that the transformations 2 and 3 leave  $\tilde C_a(\bgamma,\bbeta)$ invariant under the same type of channel, consider $U_a(\bgamma,\bbeta)$, the circuit in case of noise realization $a$. Note that any Pauli operator either commutes or anticommutes with both the layer of $X$ gates arising from any the parameter shift $RX(\beta\pm\pi)=X\,RX(\beta)$, and any $Z\otimes Z$ operators originating from  $RZZ[\pm'(\gamma\pm''\pi)]=Z \otimes Z\  RZZ[\pm'(\gamma\pm''\pi)]$. Hence, we can use the noiseless proofs of generators 2 and 3 to show $C_a(\bgamma,\bbeta)$ is invariant for all $a$ if we make the additional observation that any overall factors of $-1$, arising from anticommutativity, have no effect on the expectation value $C_a(\bgamma,\bbeta)$.

To show that generator 4 leaves $\tilde C_a(\bgamma,\bbeta)$ invariant, note $RZZ_{ij}=RZZ_{ji}$, and that $RZZ(\theta)P_i=P_i\, RZZ_{ij}(-\theta)$, if $P_i=X_i$ or $P_i=Y_i$, and $RZZ_{ij}(\theta)P_i=P_i\, RZZ_{ij}(\theta)$ if $P_i=Z$. Likewise, $RX(\theta) P=P\,RX(-\theta)$ if $P=Y$ or $P=Z$, and $RX(\theta) P=P\,RX(\theta)$ if $P=X$. Commuting all Pauli operators arising from an error realization to the beginning of the circuit, we obtain $\tilde U_a(\bgamma,\bbeta)= U_{\boldsymbol{0}}(\tilde \bgamma,\tilde \bbeta)P$, with $P$ some tensor product of Pauli operators, $\tilde U_{\boldsymbol 0}$ the noiseless circuit realization, and where $\tilde \bgamma,\tilde \bbeta$ can be determined explicitly using the aforementioned commutation relations. Thus, $U_a(\bgamma,\bbeta)\ket +^{\otimes n}=  U_{\boldsymbol{0}}(\tilde \bgamma,\tilde \bbeta) \ket \psi$, with $\ket \psi$ a state with real entries. Hence, conditions $i$-$iii$ used in the noiseless proof of generator 4 are satisfied, and therefore $C_a(\bgamma,\bbeta)=C_a(-\bgamma,-\bbeta)$.
\end{proof}

\subsection{Parameter representatives}\label{sec:parameter_representatives}
Due to the symmetries of the SK QAOA cost function landscape, two unequal sets of parameters $\Phi=(\bgamma,\bbeta)$ and $\Phi'=(\bgamma',\bbeta')$ may be deemed equivalent if there exists a group element from the symmetry group of the QAOA SK cost function landscape that maps $\Phi$ to $\Phi'$. This defines equivalence classes of sets of parameters. Algorithm~\ref{alg:rep} maps each set of parameters $\Phi$ to a unique representative $\mc A(\Phi)$ of the equivalence class that set of parameters is in. Thus, the Euclidean distance $D$ between $\Phi$ and $\Phi'$, modulo the symmetries of the cost function, may be defined as
\begin{equation}\label{eq:D}
D=\sqrt{\sum_i\left[\mathcal A(\Phi)_i-\mathcal A(\Phi')_i\right]^2}.
\end{equation}
This is the procedure used in the Supplemental Material~\cite{Note3} to plot the distance $D$ between the noise-unaware parameters and the noise-aware parameters.

\begin{algorithm}
  \hrule
  \includegraphics{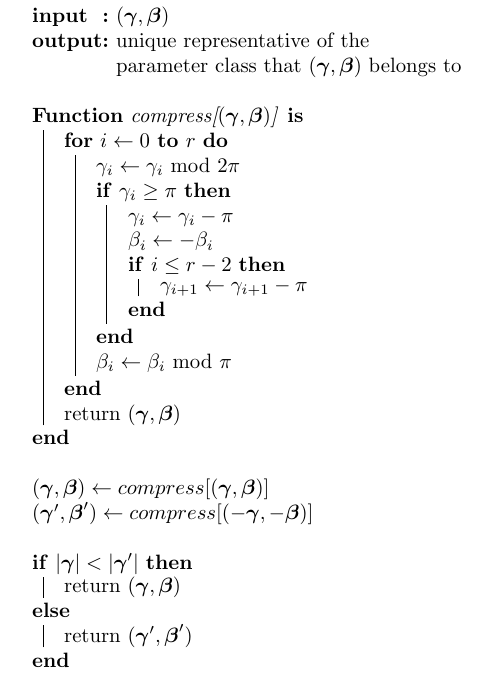}
  \hrule
  \caption{\label{alg:rep} The algorithm that removes the redundant freedom that is caused by symmetries of the QAOA SK cost function landscape. It does so by mapping each set of parameters to a unique class representative.}
\end{algorithm}

\section{Instance properties}\label{sec:instance_properties}
The instance reported on in \sec{numerical_methods_and_results} is
\begin{equation}
H=\sum_{i<j}\om_{ij}Z_iZ_j, 
\end{equation}
with 
\begin{equation}
\om=
    \left( 
        \begin{array}{cccccc}
            0 & +1 & -1 & +1 & +1 & -1 \\
            0 & 0 & +1 & -1 & +1 & +1 \\
            0 & 0 & 0 & -1 & -1 & -1 \\
            0 & 0 & 0 & 0 & -1 & -1 \\
            0 & 0 & 0 & 0 & 0 & +1 \\
            0 & 0 & 0 & 0 & 0 & 0
        \end{array}
    \right).
\end{equation}
We can write the data defining the instance more compactly by gathering and concatenating the factors $\pm 1$ from left to right and top to bottom. If we additionally make the identification  $(+1,-1)\mapsto (0,1)$, we obtain the \emph{instance bit string} 
\begin{equation}\label{eq:instance_bit_string}
\tilde \om = 010010100111110.
\end{equation}
In general, the instance bit string of a SK model on $n$ qubits contains $n_{\tilde{\om}}=n(n-1)/2$ bits and reads $\tilde \om=\prod_{i=1}^{n-1}\prod_{j=i+1}^n[(1-\om_{ij})/2]$. Here, the product acts by concatenation (not multiplication) and the bitstring is built up from left to right (i.e., if $n=6$, $\tilde \om_1=(1-\om_{1,2})/2$, $\tilde \om_2=(1-\om_{1,3})/2$, \ldots, $\tilde \om_{15}=(1-\om_{5,6})/2$).

We sampled 16 random SK problem instances (including the one above) on $n=6$ qubits by sampling 16 bit strings of length $15$ uniformly at random. Table~\ref{tab:instances} lists the following properties of these instances.
\begin{table}
  \centering
  \begin{tabular}{l|l|l|l|l}
    \parbox{8em}{\phantom{line} \\ \raggedright Instance \\ bit string\\ \vspace{.5em}} & \parbox{3em}{\raggedright Graph\\\raggedright class\\ card.\\  \vspace{.5em} } & \parbox{7em}{\phantom{line} \\\phantom{line}\\\raggedright GS manifold \\ \vspace{.5em}} & \parbox{1em}{\phantom{line}\\\phantom{line}\\\raggedright  $E_{\mrm{GS}}$ \\ \vspace{.5em}}& \parbox{3.3em}{\phantom{line}\\ \raggedright $\tilde{C}(\bgamma,\bbeta)$ \\ \raggedright class \\ \vspace{.5em} }\\\hline \hline
000100100011110\ 	&	360	&	010101	&	-7	&	0 	\\	\hline
001100110101110	&	180	&	001001	&	-9	&	1\\
001111110000010	&	720	&	011010	&	-9	&	1\\
010011011101010	&	180	&	010010	&	-9	&	1\\\hline
001111011010001	&	720	&	011011	&	-9	&	2\\
010100101011100	&	360	&	010101	&	-9	&	2\\
100010101000100	&	360	&	001110	&	-9	&	2\\
110010010101000	&	720	&	010010	&	-9	&	2\\\hline
010010100111110*	&	720	&	010000, 010110 & -7	&	3\\
011100011010011	&	360	&	010011, 010101	&	-7	&	3\\
100110010001101	&	720	&	001001, 001100	&	-7	&	3\\
101100001010110	&	360	&	001010, 010001	&	-7	&	3\\
111011011000010*	&	720	&	000010, 011010	&	-7	&	3\\
111100111011001	&	360	&	000100, 001011	&	-7	&	3\\\hline
011101001011100	&	180	&	011001	&	-11	&	4\\\hline
\parbox{1em}{\vspace{.4em} \raggedright 110101111100011\\ \ \\ \ }	&\parbox{1em}{\vspace{.4em}\raggedright 72\\ \ \\ \ }	&	\parbox{7em}{\vspace{.4em}\raggedright 000000, 000011,\\\raggedright 000101, 001100,\\\raggedright 010111, 011101}	&\parbox{2em}{\vspace{.4em}\raggedright -5 \\ \ \\ \ }	&	\parbox{0em}{\vspace{.4em}\raggedright 5\\ \ \\ \ }
  \end{tabular}
  \caption{\label{tab:instances}The random instances and some of their properties. None of the 16 graphs are mutually isomorphic, except for those noted with an asterisk.}
\end{table}
\begin{enumerate}
  \item \emph{Instance bit string.}  
  \item \emph{Graph class cardinality.} Many of the $2^{n_{\tilde \om}}$ SK Hamiltonians on $n$ qubits are equivalent up to a relabeling of the qubit indices. That is, many of the SK graphs with $n$ nodes, obtained by interpreting the $\om_{ij}$ as entries of an adjacency matrix, are isomorphic. Indeed, for $n=6$, we found by a brute-force method that the $2^{n_{\tilde \om}}=32\,768$ SK graphs fall into 156 different equivalence classes, where equivalence is defined by graph isomorphism. The graph class cardinality is the number of distinct graphs in a graph class. The highest occurring class cardinality is 720, and there are 8 classes attaining this cardinality. From the 16 random graphs we generated, only the graphs with instance bit strings 010010100111110 and 111011011000010 are in the same class (i.e., isomorphic). The class they belong to has the highest occurring cardinality of 720.

  \item  \emph{Ground state manifold.} We indicate the ground state manifold of the SK instances by lists of bit strings. The list of an instance is related to the instance's ground state manifold GS by $\mrm{GS}=\mrm{span}(\cup_i\{\ket i,\ket{\bar i}\})$, where $i$ ranges over all bit strings in the list and $\bar i$ indicates the negation of $i$. It follows from $[X^{\otimes n},H]=0$ that $\ket{\bar i}$ is in the ground state manifold if $\ket{i}$ is in the ground state manifold.

  \item \emph{Ground state energy.} The energy $E_{\mrm{GS}}$ of any state in the ground state manifold.

  \item \emph{Cost function landscape class.} Consistent with QAOA's cost function landscape concentration~\cite{brandao2018fixed}, we find that two nonisomorphic SK instances may have identical cost function landscapes (up to numerical precision) at $n=6,d=3,p=0$. We consider two SK instances to be in the same cost function landscape class if their cost function landscapes are identical. The 16 random SK instances studied in this work, of which only two are isomorphic, fall into 6 distinct cost function landscape classes.
\end{enumerate}

Cost function landscape equivalence is established numerically by the comparison of cost function landscape values at 64 random points in parameter space. We deem a pair of SK instances equivalent to numerical accuracy if their cost function landscape values differ by less than $10^{-14}$ in absolute value across all 64 random points. We found that, by this criterion, the set of pairs of instances that lie in the same cost function landscape class is well separated from the set of pairs of instances that do not. (For each pair of instances in the same cost function landscape class, there is no point in parameter space among the 64 sampled points where the pair's cost functions differ by more than $10^{-4}$. For each pair of instances \emph{not} in the same cost function landscape class, we found that there is at least one point in parameter space where their cost functions differ by more than unity.) Increasing the error probability $p$ breaks the cost function equivalence. The data presented in the Supplemental Material~\cite{Note3} show that this does not lead to qualitative differences in the AR of instances in the same cost function landscape class.

In the main text, we have shown the data pertaining to the SK instance with instance bit string 010010100111110. We consider this instance typical because it falls into a graph class with the highest possible cardinality and because the qualitative performance of QAOA on this instance is largely equal to the qualitative performance of QAOA on all other sampled instances.

\end{document}